\theoremstyle{plain}
\newtheorem{Theorem}{Theorem}
\newtheorem{Proposition}[Theorem]{Proposition}
\definecolor{darkblue}{RGB}{0,0,100}	
\definecolor{darkred}{RGB}{153,0,0}	
\definecolor{darkredc}{RGB}{90,0,0}
\definecolor{thisblue}{rgb}{0.03, 0.27, 0.49}
\definecolor{darkgreen}{rgb}{0.0, 0.42, 0.24}
\definecolor{myorange}{rgb}{0.8, 0.5, 0.4}
\title{Overcoming Measurement Inconsistency in Deep Learning for Linear Inverse
Problems: Applications in Medical Imaging}
\name{
  Marija Vella 
  \qquad\qquad 
  Jo\~{a}o F. C. Mota
  \thanks{
    Work supported by EPSRC's New Investigator Award EP/T026111/1 and by the Capital
    Award EP/S018018/1.
  }
}
\address{Institute of Sensors, Signals and Systems, \,Heriot-Watt University,\, Edinburgh,\, UK}
\newcommand{\mypar}[1]{{\bf #1.}}
\begin{document}
\ninept
\maketitle
\begin{abstract}
The remarkable performance of deep neural networks (DNNs) currently makes them the method of choice for solving linear inverse problems. They have been applied to super-resolve and restore images, as well as to reconstruct MR and CT images. In these applications, DNNs invert a forward operator by finding, via training data, a map between the measurements and the input images. It is then expected that the map is still valid for the test data. This framework, however, introduces measurement inconsistency during testing. We show that such inconsistency, which can be critical in domains like medical imaging or defense, is intimately related to the generalization error. We then propose a framework that post-processes the output of DNNs with an optimization algorithm that enforces measurement consistency. Experiments on MR images show that enforcing measurement consistency via our method can lead to large gains in reconstruction performance.
\end{abstract}
\begin{keywords}
Neural networks, linear inverse problems, medical imaging, optimization, total variation. 
\end{keywords}
\section{Introduction}
\label{sec:intro}

Many applications in science and engineering require solving inverse problems
in which the number of available measurements is much smaller than the number
of parameters to be estimated. Examples include various medical imaging
modalities, remote sensing, image restoration, seismography, and LiDAR depth
estimation.  In such problems, we have access to $m$ linear measurements of a
vector $x^\star \in \mathbb{C}^n$ that we wish to estimate.  Formally, we have
the linear system
\begin{equation}
b = Ax^\star\,,
\label{eq:ILP}
\end{equation}
where $b \in \mathbb{C}^{m}$ is the vector of measurements, and $A \in
\mathbb{C}^{m \times n}$ is a known measurement matrix with $m < n$.
As the number of measurements in~\eqref{eq:ILP} is smaller than the number of
variables, there is often an infinite number of vectors $x \in \mathbb{C}^n$
satisfying~\eqref{eq:ILP}. 

The classical approach of inferring $x^\star$ from~\eqref{eq:ILP} formulates an
optimization problem for finding the simplest solution of~\eqref{eq:ILP}
according to the known structure of $x^\star$. Such structure is captured by
regularizers such as the
$\ell_1$-norm~\cite{Donoho98-AtomicDecompositionBasisPursuit}, which enforces
sparsity in a given domain, or generalizations like the total variation (TV)
norm~\cite{Rudin92-NonlinearTotalVariationBasedNoiseRemovalAlgorithms} and
structural or hierarchical
priors~\cite{Baraniuk2010modelbasedcompressedsensing}. In recent years these
optimization-based methods for linear inverse problems have been surpassed by
data-driven approaches, specifically, deep (convolutional) neural networks
(DNNs). By leveraging large datasets of input-output pairs $(x^\star,\, b)$
during training, DNNs are able to automatically learn the structure of typical
signals. During deployment, this enables them not only to reconstruct $x^\star$
with quality better than optimization approaches, but also to do it faster.
This phenomenon has been observed in several linear inverse problems, including
single-image
super-resolution~\cite{Dong16-ImageSuperResolutionUsingDeepConvolutionalNetworks},
denoising~\cite{zhang2018ffdnet}, biomedical
imaging~\cite{Jin17-DeepConvolutionalNeuralNetworkForInverseProblemsInImaging},
and LiDAR depth estimation~\cite{Uhrig17-SparsityInvariantCNNs}.

Despite these successes, DNNs still suffer from important
drawbacks that have slowed down their application in critical domains, such as
autonomous driving or fully automated medical diagnosis. These include
overfitting, lack of sharp theoretical guarantees, and
instability with respect to small data
perturbations~\cite{Antun2020insstabilities}. A related drawback, which is
particularly important in linear inverse problems and which we explore in this
paper, is measurement inconsistency. 

\mypar{Measurement inconsistency in DNNs}
Typically, DNNs for linear inverse problems are trained by minimizing a
real-valued loss function $\ell\,:\, \mathbb{C}^{n} \times \mathbb{C}^n \to
\mathbb{R}$, usually the $\ell_2$-norm, over all the $T$ samples
$\{x^{(t)}\}_{t=1}^T$ of the training set. That is, if
$f_{\theta}\, :\, \mathbb{C}^m \to \mathbb{C}^n$ represents a DNN with
parameters $\theta$, one finds the optimal set of parameters $\theta^\star$ by
solving
\begin{equation}
  \label{Eq:TrainingACNN}
  \underset{\theta}{\text{minimize}}\,\,\, \frac{1}{T}\sum_{t = 1}^{T} \ell\Big(x^{(t)},
  f_{\theta}(A x^{(t)})\Big)\,.
\end{equation}
Once trained, given a vector of measurements $b$
from a sample $x^\star$ absent from the training set, the
DNN estimates $x^\star$ by simply applying a forward pass to $b$:
\begin{align}
  \label{Eq:ForwardPass}
  w := f_{\theta^\star}(b)\,.
\end{align}
We say that the trained DNN $f_{\theta^\star}$ is \textit{measurement
  inconsistent} when $Aw
  \neq b$.\footnote{\label{foot:noise}This definition can be extended to the case in which the model in~\eqref{eq:ILP} is noisy, $b = Ax^\star + e$, and $e$ is
    bounded: $\|e\|_2 \leq
  \sigma$, where $e \in \mathbb{C}^m$ is the noise vector. In this case, we say the DNN is measurement inconsistent when $\|Aw
- b\|_2 > \sigma$.} In fact, we show in Proposition~\ref{Prop:Bound} that
measurement inconsistency is a by-product of generalization errors, even when the DNN is
trained to minimize inconsistency, i.e., when the loss function
in~\eqref{Eq:TrainingACNN} is $\ell(x,\, w) := \|Ax - Aw\|_2^2$. 

This calls for a new framework that not only enforces data consistency,
but also can harness the excellent performance of DNNs. And while the data
consistency problem has been identified as an important one, for example, in medical imaging, the existing solutions consist mostly of modifying the network
architecture. For instance, \cite{Jin2017inverse,lim2017edsr} introduce skip-connections between the input and
output, and \cite{zhang2018learning} trains a DNN using different
models of $A$. These solutions still fall under the
framework of~\eqref{Eq:TrainingACNN} and, thus, suffer from the problem
highlighted in Proposition~\ref{Prop:Bound}. Our approach consists instead of
post-processing the output of the DNN, $w$ in~\eqref{Eq:ForwardPass}, with an
optimization algorithm that enforces consistency.

\begin{figure}[ht]
	\includegraphics[width=8.5cm, height=3.4cm]{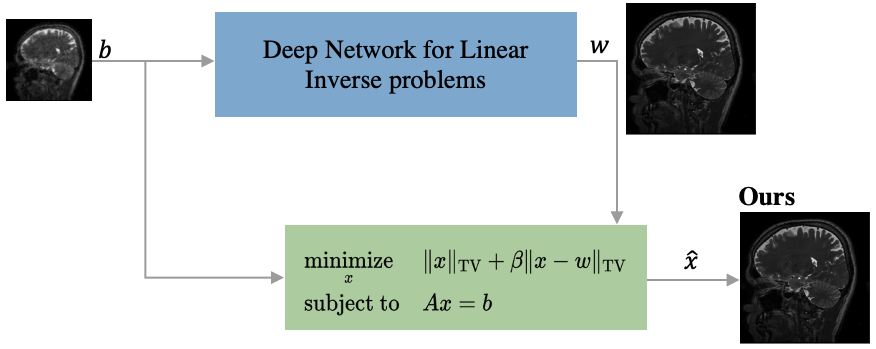}
	\caption{Our framework. Measurement consistency is enforced by
  post-processing the output of a DNN designed for a linear inverse problem
with an optimization problem with $Ax = b$ as a constraint.}
	\label{fig:frameworkV2}
\end{figure}

\mypar{Our approach and contributions}
Fig.~\ref{fig:frameworkV2} shows a diagram of the scheme we propose. 
The scheme addresses measurement consistency by formulating an
optimization problem, named \textit{TV-TV minimization}, that
explicitly enforces consistency via constraints, while minimizing a cost
function that balances a small total variation (TV) of the reconstructed image ---
as in classical optimization-based approaches --- and proximity to the output
of the DNN, as measured by the TV-norm. This last term is the key element of
our approach: it allows combining the major benefit of optimization-based
methods, i.e., the ability to enforce constraints explicitly, with the major
benefit of data-driven methods, i.e., their outstanding performance. Our
experiments show that our framework improves the reconstruction performance
of state-of-the-art DNNs for MRI reconstruction, e.g.,
MoDL~\cite{aggarwal2019modl}, by $5\, \text{dB}$ in PSNR. 
We summarize our contributions as follows:
\begin{itemize}

  \item We show that the conventional method for training DNNs according to
    empirical loss minimization~\eqref{Eq:TrainingACNN} cannot solve the
    measurement consistency problem in linear inverse problems; this is done in Proposition~\ref{Prop:Bound} and
    illustrated with experiments.

  \item We propose a framework that addresses the measurement consistency
    problem by combining an optimization-based method, TV-TV minimization, with
    DNNs.

  \item We apply our framework to MRI and show that it can
    lead to significant gains with respect to state-of-the-art DNNs.
\end{itemize}
We point out that we proposed a particular instance of this framework
in~\cite{vella2019tvtvminimization,Vella20-RobustSingleImageSuperResolutionViaCNNsAndTVTVMinimization}
for single-image super-resolution. There, the observed gains in PSNR were
much smaller ($<1 \, \text{dB}$) than the ones observe here for MRI. In this
paper, we address the consistency problem theoretically. And to 
apply TV-TV minimization to MRI, we had to generalize the algorithm to handle
complex vectors and matrices. 

After overviewing prior work in Section~\ref{Sec:RelatedWork}, we present our
theoretical result on consistency in Section~\ref{Sec:Inconsistency}.
Section~\ref{Sec:Framework} describes our framework, and
Section~\ref{sec:experiments} our experiments on MRI reconstruction.

\section{Related Work}
\label{Sec:RelatedWork}

\mypar{Optimization-based approaches}
Inverse linear problems have traditionally been solved via optimization algorithms. The associated optimization problem is formulated such that its solution simultaneously conforms to the observed measurements and is sufficiently simple according to some prior knowledge. The latter is usually encoded by sparsity in some domain, such as wavelet or DCT representations~\cite{Donoho98-AtomicDecompositionBasisPursuit}, or in gradient space~\cite{Rudin92-NonlinearTotalVariationBasedNoiseRemovalAlgorithms}. Sparsity can be encoded directly via an $\ell_0$-pseudo-norm, which leads to nonconvex problems that can be solved via greedy approaches~\cite{Donoho2012,needell2009coasmp}, or via a convex relaxation such as the $\ell_1$-norm, which leads to convex problems that can be provably solved~\cite{Boyd04-ConvexOptimization}. Convex formulations also usually have strong theoretical reconstruction guarantees~\cite{Chandrasekaran12-ConvexGeometryLinearInverseProblems}. Indeed, \cite{Chandrasekaran12-ConvexGeometryLinearInverseProblems} generalizes the concept of simplicity via sparsity to \textit{atomic norms}, which apply to a wider range of problems.
In the field of MRI reconstruction, a common choice for simplicity is sparsity in the gradient domain, which is captured by a total variation (TV) norm~\cite{Rudin92-NonlinearTotalVariationBasedNoiseRemovalAlgorithms,Block2007mritv,Weizman16-ReferenceBasedMRI} or generalizations of the TV-norm~\cite{knoll2012mritv}.

\mypar{DNN-based methods}
Deep neural networks (DNNs) have been successfully applied in various linear inverse problems~\cite{Dong16-ImageSuperResolutionUsingDeepConvolutionalNetworks,zhang2018ffdnet,Jin17-DeepConvolutionalNeuralNetworkForInverseProblemsInImaging,Uhrig17-SparsityInvariantCNNs,lim2017edsr,schlemper2018deepcascade}. These networks learn to invert a forward model
by leveraging several input-output pairs obtained from a single measurement operator. Once trained, they can be used to reconstruct the input of the operator from its output by a simple forward pass. This operator, however, often fails to guarantee measurement consistency. Attempts to solve this problem include adding skip-connections to share information from the input to the final layer~\cite{Jin2017inverse,lim2017edsr,Gupta2018}, embedding a data consistency layer in the network~\cite{schlemper2018deepcascade,qincrnn} and using a cycle consistency loss \cite{Oh202unpaireddeeplearning}. 
Another approach is to unroll iterative optimization algorithms that alternate between data consistency and a nonlinear operation related to prior knowledge. This idea was first proposed~\cite{Gregor2010}, and it spawned different lines of research, e.g., \cite{Chang_OneNet,aggarwal2019modl,Luong20-InterpretableDeepRecurrent}. For example,  \cite{aggarwal2019modl} trains a DNN denoiser whose output is then fed to an optimization block that acts as a data consistency layer~\cite{aggarwal2019modl}. As our experiments show, this is still not enough to guarantee consistency.

\mypar{Instability in DNNs}
In recent years, several issues have been identified in the deployment of DNNs, including overfitting and data memorization~\cite{Zhang17-UnderstandingDeepLearningRequiresRethinkingGeneralization}, and adversarial examples~\cite{DeepFool}. For linear inverse problems, \cite{Antun2020insstabilities} studied how small perturbations to the measurement or to the sampling method can lead to artefacts in the reconstruction. Moreover, the reconstructed outputs may miss fine details that are present in measurements, a feature that can have critical consequences in medical imaging. 

To overcome this, we present a framework that leverages the good performance of DNNs, while enforcing measurement consistency. Consequently, the risk of missing such details is reduced. In contrast to other methods, we enforce hard constraints in order to ensure measurement consistency. Before presenting our framework, we first see how inconsistency in DNNs is related to the generalization error.

\section{The probability of inconsistency in DNNs}
\label{Sec:Inconsistency}

Here we adopt a probabilistic setting to analyze the measurement
inconsistency problem of DNNs.  
Let $X \in \mathbb{C}^n$ represent a vector of $n$ random complex variables
whose $i$th component is $X_i = \text{Re}\{X_i\} + j\text{Im}\{X_i\}$. To avoid
technicalities, we assume all the functions we deal with are measurable,
including the DNN $f_{\theta}\,:\, \mathbb{C}^m \to\mathbb{C}^n$, for any
$\theta$. Without loss of generality, we assume the following squared
$\ell_2$-norm loss: $\ell(x,\, w) := \|Ax - Aw\|_2^2$. Hence, the \textit{expected
loss} associated to $f_{\theta}$ is defined as
\begin{equation}
  \label{Eq:ExpectedLoss}
  \ell_{\text{exp}}(f_{\theta}) 
  := 
  \mathbb{E}\Big[ \big\|AX - Af_{\theta}(A X)\big\|_2^2\Big]\,,
\end{equation}
where expectation is with respect to $X$. The \textit{empirical loss of}
$f_{\theta}$ on the training set $\mathcal{T} := \{x^{(t)}\}_{t=1}^T$, where
$x^{(t)}$ is a realization of $X$, is
\begin{equation}
  \label{Eq:EmpiricalLoss}
  \ell_{\text{emp}}\big(f_{\theta}\, ;\, \mathcal{T}\big) 
  := 
  \frac{1}{T}\sum_{t \in \mathcal{T}} \big\|Ax^{(t)} -
  Af_{\theta}(Ax^{(t)})\big\|_2^2\,.
\end{equation}
Notice that these definitions differ from conventional definitions of
expected and empirical loss in two ways (see, e.g.,
\cite{Amjad18-OnDeepLearningForInverseProblems,Luong20-InterpretableDeepRecurrent}
for examples in linear inverse problems). First, the probability distribution
is defined over the output space $X$ only. Indeed, under the assumption that
the measurements in~\eqref{eq:ILP} are noiseless, the input random variable $B
\in \mathbb{C}^m$ is completely specified by $X$: $B = AX$. Second, the
$\ell_2$-norm is applied to the output rather than to the input space: that is,
in~\eqref{Eq:ExpectedLoss}, we consider $\|AX - Af_{\theta}(AX)\|_2^2$ rather
than $\|X - f_{\theta}(AX)\|_2^2$. This
reflects a training strategy to minimize inconsistency, and captures the type of
regularization terms used in deep prior or network unrolling models, e.g.,
MoDL~\cite{aggarwal2019modl}. This is without loss of generality, as the loss can include additional terms. The following
proposition bounds the probability of the DNN
outputting an inconsistent result as a function of the generalization error
$c:= \ell_\text{exp}(f_{\theta}) - \ell_{\text{emp}}(f_{\theta}\,;\,
\mathcal{T})$. 
\begin{Proposition}
  \label{Prop:Bound}
  Consider $f_{\theta^\star}$ with a parameter $\theta^\star$ that achieves an
  empirical loss $\epsilon := \ell_{\text{emp}}(f_{\theta^\star}\,;\, \mathcal{T}) > 0$.
  Assume the random variable $Y:=\|AX -
  Af_{\theta^\star}(AX)\|_2^2$ is upper bounded by $C$ almost surely. Assume
  a positive generalization error $c := \ell_{exp}(f_{\theta}) - \epsilon
  > 0$. Then, for any $\delta$ such that $0 < \delta < c + \epsilon$, 
  \begin{equation}
    \label{Eq:Bound}
    \mathbb{P}\Big(\big\|AX - Af_{\theta^\star}(AX)\big\|_2^2 \geq \delta\Big)
    \geq
    1 - \exp\Big(-2\frac{(c + \epsilon - \delta)^2}{C^2}\Big)\,.
  \end{equation}
\end{Proposition}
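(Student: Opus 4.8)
The plan is to read the claimed inequality as a \emph{reverse} tail bound: I want to show that $Y := \|AX - Af_{\theta^\star}(AX)\|_2^2$ stays \emph{above} the threshold $\delta$ with high probability, using only that $Y$ takes values in $[0,C]$ almost surely together with information about its mean. So the first step is to identify the mean of $Y$. By the definition in~\eqref{Eq:ExpectedLoss} we have $\mathbb{E}[Y] = \ell_{\text{exp}}(f_{\theta^\star})$, and since the empirical loss is $\epsilon := \ell_{\text{emp}}(f_{\theta^\star};\mathcal{T})$ and the generalization error is $c := \ell_{\text{exp}}(f_{\theta^\star}) - \epsilon$, this gives $\mathbb{E}[Y] = c + \epsilon$. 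The hypothesis $0 < \delta < c + \epsilon$ then says exactly that $\delta$ lies strictly below the mean of $Y$, which is the regime in which such a reverse bound is possible.

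The second step is to apply Hoeffding's inequality to the single bounded random variable $Z := \mathbb{E}[Y] - Y$, which has mean zero and lies in an interval of length $C$ almost surely. Hoeffding's lemma gives $\mathbb{E}\bigl[e^{sZ}\bigr] \le \exp\bigl(s^2 C^2/8\bigr)$ for every $s \ge 0$; combining this with the Chernoff bound $\mathbb{P}(Z \ge t) \le e^{-st}\,\mathbb{E}[e^{sZ}]$ and optimizing over $s$ (the minimizer is $s = 4t/C^2$) yields, for every $t > 0$,
\[
  \mathbb{P}\bigl(\mathbb{E}[Y] - Y \ge t\bigr) \;\le\; \exp\!\Bigl(-\frac{2 t^2}{C^2}\Bigr)\,.
\]
The third step is purely bookkeeping: set $t := c + \epsilon - \delta > 0$, so that the event $\{Y < \delta\}$ is contained in $\{\mathbb{E}[Y] - Y \ge t\}$, hence $\mathbb{P}(Y < \delta) \le \exp\bigl(-2(c+\epsilon-\delta)^2/C^2\bigr)$. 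Taking complements gives exactly the bound~\eqref{Eq:Bound}.

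I do not expect a genuine obstacle here: the argument is essentially a direct invocation of a standard concentration inequality. The only points that require care are (i) correctly computing $\mathbb{E}[Y] = c + \epsilon$ from the stated definitions, which relies on the noiseless model $B = AX$ so that the loss is a function of $X$ alone, and (ii) the strict-versus-non-strict inequality at the threshold $\delta$, which is harmless since we only need an upper bound on $\mathbb{P}(Y < \delta)$. As an aside, one could bypass Hoeffding entirely via the elementary "reverse Markov" split $\mathbb{E}[Y] \le \delta\,\mathbb{P}(Y<\delta) + C\,\mathbb{P}(Y\ge\delta)$, but this yields only the polynomial bound $\mathbb{P}(Y \ge \delta) \ge (c+\epsilon-\delta)/(C-\delta)$ rather than the exponential one in~\eqref{Eq:Bound}, so I would use the Hoeffding route to match the stated claim.
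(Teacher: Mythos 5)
Your proof is correct and follows essentially the same route as the paper: both arguments use the a.s.\ bound $0 \le Y \le C$ to get a sub-Gaussian moment-generating-function bound with parameter $C/2$ (your Hoeffding's lemma step), identify $\mathbb{E}[Y] = c + \epsilon$, apply a Chernoff bound to the lower tail with the optimal exponent, and take complements. If anything, your one-sided treatment of the event $\{Y < \delta\}$ is slightly cleaner than the paper's detour through the two-sided event $\{|Y - \mathbb{E}[Y]| > c + \epsilon - \delta\}$, but the substance is identical.
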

\begin{proof}
  By assumption, $0 \leq Y \leq C$ almost surely, which implies that $Y$ is
  sub-Gaussian with parameter $\sigma := C/2$~\cite{Wainwright19-HDStats}. 
  Therefore, for any $t > 0$,
  \begin{align}
    \mathbb{P}(Y < \delta)
    &=
    \mathbb{P}\big(Y - \mathbb{E}[Y]< \delta - \mathbb{E}[Y]\big)
    \label{Eq:ProofS1}
    \\
    &=
    \mathbb{P}\big(Y - \mathbb{E}[Y] < \delta - c - \epsilon\big)
    \label{Eq:ProofS2}
    \\
    &\leq
    \mathbb{P}\big(|Y - \mathbb{E}[Y]| >  c + \epsilon - \delta\big)
    \label{Eq:ProofS3}
    \\
    &\leq
    \exp\Big(-t(c + \epsilon - \delta) + \frac{t^2 \sigma^2}{2}\Big)\,.
    \label{Eq:ProofS4}
  \end{align}
  From~\eqref{Eq:ProofS1} to~\eqref{Eq:ProofS2}, we used the definition of $c =
  \mathbb{E}[Y] - \epsilon$. From~\eqref{Eq:ProofS2} to~\eqref{Eq:ProofS3}, we
  used the fact that $\delta < c + \epsilon$. And from~\eqref{Eq:ProofS3}
  to~\eqref{Eq:ProofS4}, we applied a Chernoff
  bound~\cite[\S4.2]{Mitzenmatcher05-ProbabilityAndComputing} taking into
  account that $Y$ is sub-Gaussian with parameter
  $\sigma$~\cite[\S2.2.1]{Wainwright19-HDStats}. Setting $t = (c + \epsilon -
  \delta)/\sigma^2$, replacing $\sigma = C/2$, and taking the complementary
  event in~\eqref{Eq:ProofS4} yields~\eqref{Eq:Bound}.
\end{proof}
The left-hand side of~\eqref{Eq:Bound} expresses the probability that the
output of the DNN is inconsistent, and the right-hand side increases as a
function of the generalization gap $c$ as well as the empirical loss
$\epsilon$. The assumption that the random variable $Y = \|AX -
  Af_{\theta^\star}(AX)\|_2^2$ is bounded above can be easily relaxed: the
  proof can be adapted if we assume Y is sub-Gaussian or even sub-exponential
  (fat tails).

\section{Our Framework}
\label{Sec:Framework}

To address the measurement inconsistency problem studied in Section \ref{Sec:Inconsistency}, we propose the framework represented in Fig.~\ref{fig:frameworkV2}, which in contrast to conventional DNNs, is able to ensure measurement consistency.
In order to understand how it differs from previous approaches, recall
that conventional DNN-based methods for linear inverse problems use empirical loss
minimization~\eqref{Eq:TrainingACNN} to find a good-enough parameter
$\theta^\star$ of a DNN $f_{\theta}:
\mathbb{C}^m \to \mathbb{C}^n$. Then, during testing, they simply apply
$w:=f_{\theta^\star}(b)$ to measurements $b = Ax^\star$ of unseen data
$x^\star$.
This framework applies to DNNs 
designed for a specific inverse problem, e.g., single-image
super-resolution~\cite{Dong16-ImageSuperResolutionUsingDeepConvolutionalNetworks},
as well as unrolled networks, e.g., \cite{Gregor2010,Luong20-InterpretableDeepRecurrent}.
However, as shown in Proposition~\ref{Prop:Bound},
a generalization error typically implies inconsistency of the DNN output to the
measurements, i.e., $Af_{\theta^\star}(b) \neq b$.

\mypar{TV-TV minimization}
To overcome this problem, and as shown in Fig.~\ref{fig:frameworkV2}, we
feed the measurements $b$ together with the output $w = f_{\theta^\star}(b)$ of a given
DNN, which is generally very close to the desired $x^\star$, but measurement
inconsistent, into an optimization problem that we call TV-TV minimization:
\begin{equation}
\begin{array}[t]{ll}{
	\underset{x}{\text{minimize}}} & \|x\|_{\text{TV}}+\beta\|x-w\|_{\text{TV}} 
\\ 
  \text{subject to} & Ax=b\,.
\end{array}
\label{eq:TV-TV}
\end{equation}
In the objective function, $\|\cdot\|_{\text{TV}}$ stands for the 2D TV
semi-norm of an image $\overline{x} \in \mathbb{C}^{M \times N}$ [whose
vectorization is $x \in \mathbb{C}^n$ with $n = M\cdot N$]. It is defined as
$\|x\|_{\text{TV}} := \sum_{i=1}^{M}\sum_{j=1}^{N} |v_{ij}^\top x| + |h_{ij}^\top x| =
\|D x\|_1$,
where $v_{ij}, h_{ij} \in \mathbb{R}^n$ are real vectors that extract the
vertical and horizontal differences at pixel $(i, j)$ of $\overline{x}$, and $D
\in \mathbb{R}^{2n \times n}$ is the vertical concatenation of $v_{ij}^\top$
and $h_{ij}^\top$ for all $i = 1, \ldots, M$ and $j = 1, \ldots, N$. The first
term, $\|x\|_{\text{TV}}$, encodes the assumption that the image to reconstruct
has a small number of edges compared to its dimension. This is a standard
approach in optimization-based methods for image
restoration~\cite{Rudin92-NonlinearTotalVariationBasedNoiseRemovalAlgorithms,li2013efficient}
and MRI
reconstruction~\cite{Weizman16-ReferenceBasedMRI}. The second term, $\beta\|x -
w\|_{\text{TV}}$, specifies that the solution of~\eqref{eq:TV-TV} should be
close to the output $w$ of the DNN, in a TV-norm sense.  Here, $\beta$ balances
between the two terms of the objective. According to the theory
in~\cite{mota2019cswithpi} and the experiments
in~\cite{vella2019tvtvminimization,Vella20-RobustSingleImageSuperResolutionViaCNNsAndTVTVMinimization},
a value close to $\beta = 1$ yields the best results. Finally, consistency is
achieved by constraining the solution of~\eqref{eq:TV-TV} to satisfy $Ax = b$.
Notice that~\eqref{eq:TV-TV} can be modified to accommodate noisy measurements
as in footnote~\ref{foot:noise}. However, as will be shown in Section~\ref{sec:experiments},
we found that even when considering a noiseless (and thus inaccurate) model,
post-processing the output $w$ of a state-of-the-art DNN for MRI reconstruction
via~\eqref{eq:TV-TV} leads to significant performance gains. 

\mypar{Algorithm for solving~\eqref{eq:TV-TV}}
We apply ADMM~\cite{boyd_distributed_2011} to a reformulation
of~\eqref{eq:TV-TV}. The main idea and algorithm are described
in~\cite{Vella20-RobustSingleImageSuperResolutionViaCNNsAndTVTVMinimization}.
But to apply the resulting algorithm to MRI reconstruction, we had to slightly
change the reformulations in order to handle complex vectors and matrices. Details will be described in a forthcoming paper.

 \begin{figure*}[ht]
 	\centering
 	\begin{subfigure}[b]{0.3\textwidth}
 		\centering
 		\includegraphics[width=5.3cm, height=6cm]{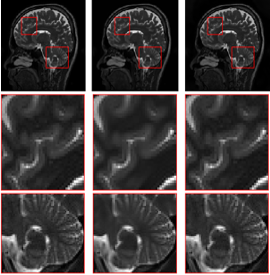}
 		\smallskip
    \hspace{0cm}{GT}\hspace{1.2cm}{MoDL}\hspace{1cm}{\textbf{\textit{\textcolor{darkred}{Ours}}}}
        \caption{Result on a sample image using MoDL.}
        \label{SubFig:MoDLA}
 	\end{subfigure}
 	\hfill
 	 	\begin{subfigure}[b]{0.3\textwidth}
 		\centering
 		\includegraphics[width=5.3cm, height=6cm]{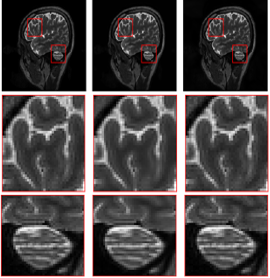}
 		\smallskip
        \hspace{0cm}{GT}\hspace{1.3cm}{MoDL}\hspace{1cm}{\textbf{\textit{\textcolor{darkred}{Ours}}}}
        \caption{Result on a sample image using MoDL.}
        \label{SubFig:MoDLB}
 	\end{subfigure}
 	\hfill
  	\begin{subfigure}[b]{0.3\textwidth}
 	    \centering
 	    \includegraphics[width=5.3cm, height=6cm]{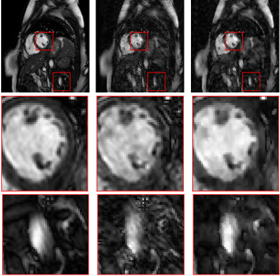}
 		\smallskip
        \hspace{0.1cm}{GT}\hspace{1.2cm}{CRNN}\hspace{1cm}{\textbf{\textit{\textcolor{darkred}{Ours}}}}
        \caption{Result on a sample image using CRNN.}
        \label{SubFig:CRNN}
 \end{subfigure}
 \caption{Reconstruction of two test images \text{(a)}-\text{(b)} for MoDL~\cite{aggarwal2019modl}, and \text{(c)} one test image for CRNN~\cite{qincrnn}. Each figure shows the full image in the top row, and two zoomed regions in the two bottom rows; the left column shows the groundtruth (GT), and the right one our method.}
 \label{fig:results}
\end{figure*}


\section{Application to MRI Reconstruction}
\label{sec:experiments}

We now describe our experiments for MRI reconstruction.
After explaining the setup, we illustrate how the proposed framework solves the
measurement consistency problem and how this leads to significant gains in
reconstruction performance. Code to replicate our experiments is available
online\footnote{https://github.com/marijavella/mri-tvtv}.

\mypar{Experimental setup}
We applied our framework to two state-of-the-art MRI
reconstruction DNNs:  MoDL~\cite{aggarwal2019modl} and CRNN~\cite{qincrnn}.
MoDL reconstructs multichannel MRI images in which data is acquired by a
12-channel head coil, and was trained on the multichannel brain dataset acquired by the authors of~\cite{aggarwal2019modl}. For testing, we used 164 slices from a single subject,
resulting in a test dataset of dimensions $256\times 232 \times 164 \times 12$
($\text{rows} \times \text{columns} \times  \text{slices} \times \text{number of
coils}$). CRNN, in turn, applies to data acquired by a single-channel coil, and
was trained on cardiac images from various subjects \cite{qincrnn}. To
avoid retraining CRNN, we used a pretrained version of the network, which was
trained for a single subject. As we will see, this led to low quality outputs.
Both methods process the real and imaginary parts of the complex MRI data
separately. We set an acceleration factor of 6 for MoDL and of 4 for CRNN.
During testing, we added no artificial noise to the data.

For MoDL (resp.\ CRNN), we set $\beta=1$ (resp.\ $\beta = 0.8$)
in~\eqref{eq:TV-TV} and ran our algorithm a maximum number 
of 100 (resp.\ 50) ADMM iterations. The measurement matrix $A$ in~\eqref{eq:ILP} for MoDL was the product of a sampling mask $S$, a coil sensitivity map $C$,
and a 2D discrete Fourier transform $F$, i.e., $A = SFC$, and for CRNN it was just $A = SF$, as this
network operates on a single coil only.

Because background noise in the images introduced some variation
in the results, we evaluated the performance metrics, PSNR and SSIM, on
images cropped to the relevant anatomic content.

\begin{table}
  \footnotesize
	\centering
	\renewcommand{\arraystretch}{1}
	\caption{
    Measurement consistency of MoDL, 
    CRNN (2nd column), and corresponding consistency of our method (3rd column).
	}
	\begin{tabular}{@{}lll@{}}
    \toprule
		\textcolor{thisblue}{\textbf{Method}} 
    &
    \textcolor{thisblue}{$\boldsymbol{\|Aw-b\|_2}$} 
    &
    \textcolor{darkredc}{$\boldsymbol{\|A\hat{x}-b\|_2}$} \\
		\midrule
    \textcolor{darkgreen}{MoDL} \cite{aggarwal2019modl} 
		 & $3.10\times10^{-1} $ & $\boldsymbol{9.88\times10^{-5}}$\\
		\midrule
    \textcolor{myorange}{CRNN}~\cite{qincrnn}
		& $2.06\times10^{-6}$  & $\boldsymbol{7.71\times10^{-15}}$ \\
		\bottomrule
	\end{tabular}
	\label{tab:consistency}
\end{table}

\mypar{Measurement consistency}
Table~\ref{tab:consistency} displays the consistency metric $\|Ax - b\|_2$ for the outputs of MoDL and CRNN (2nd column), and the respective metric after post-processing with our algorithm (3rd column). The input and output images are displayed in Figs.~\ref{SubFig:MoDLA}-\ref{SubFig:MoDLB} for MoDL and in Fig.~\ref{SubFig:CRNN} for CRNN. It can be seen that our algorithm reduces this metric by 4 orders of magnitude for MoDL, and by 9 orders of magnitude for CRNN. Although $\|Aw -b\|_2$ for CRNN is already very small, this does not necessarily translate into good reconstruction performance, as we will see next. 

\begin{table}[t]
    \footnotesize
	\centering
	\renewcommand{\arraystretch}{1}
	\caption{PSNR and SSIM in the format average $\pm$ std, \,min/max. The best values are highlighted in bold.
	}
	\begin{tabular}{@{}l@{\hspace{1.4 \tabcolsep}} l @{\hspace{1.4\tabcolsep}} l@{}}
	\toprule
		\textcolor{thisblue}{\textbf{Method}} & \textcolor{thisblue}{\textbf{PSNR}} & 	\textcolor{thisblue}{\textbf{SSIM}} \\
		\toprule
		{\textcolor{darkgreen}{MoDL}} 
		& 
		39.06 $\pm$ 1.58,\, 33.86/40.91
		& 
		0.97 $\pm$ 0.02,\, 0.84/0.99\\
		\midrule[0.1pt]
		{\textcolor{darkgreen}{Ours} }& 
		\textbf{45.96} $\pm$ \textbf{3.94},\, \textbf{35.48}/\textbf{53.45} 
		& 
		\textbf{0.98} $\pm$ \textbf{0.02},\, \textbf{0.85}/\textbf{1.00} \\
		\toprule
		{\textcolor{myorange}{CRNN}}  
		& 
		24.08 $\pm$ 0.59,\, 22.91/25.29 
		& 
		0.71 $\pm$ 0.03,\, 0.64/0.78
		\\
		\midrule[0.1pt]
		{\textcolor{myorange}{Ours}}  
		& 
		\textbf{25.45} $\pm$ \textbf{0.71},\, \textbf{24.17}/\textbf{26.70} 
		& 
		\textbf{0.76} $\pm$ \textbf{0.02},\, \textbf{0.71}/\textbf{0.80} \\
		\bottomrule
	\end{tabular}
	\label{tab:all}
\end{table}

\mypar{Reconstruction performance}
Table \ref{tab:all} shows the results we obtained on the brain dataset for MoDL, and on the 30 cardiac test set for CRNN. The 2nd (resp.\ 3rd) column displays the average PSNR (resp.\ SSIM) and respective standard deviation over the test images. The first (resp.\ last) two rows refer to the performance of MoDL (resp.\ CRNN) and of the subsequent processing with our method. 

In the case of MoDL, it can be seen that our post-processing increased the PSNR performance by more than 5dB. We also observe only a marginal increase in SSIM. The reason may be because the SSIM values for MoDL were already large and, being constrained to the interval $[0,1]$, were difficult to increase.
However, Figs.~\ref{SubFig:MoDLA}-\ref{SubFig:MoDLB}, which show two examples of test images, demonstrate visually that our method preserves edges, whereas MoDL over-smooths them. The results in Table~\ref{tab:consistency} indicate that this is a by-product of enforcing consistency.

For CRNN, the last two rows of Table~\ref{tab:all} show that our method
improved both the PSNR and SSIM values. The gains, however, were much smaller,
likely because CRNN enforces consistency better than MoDL. The table also shows
that the reconstruction performance using CRNN is much worse than using MoDL,
in part because this network was trained on a single subject.
Fig.~\ref{SubFig:CRNN} demonstrates visually that our method preserves details better.

\section{Conclusions}

We studied the phenomenon of measurement inconsistency in DNNs for linear inverse problems. We achieved this by relating the probability of obtaining an inconsistent output to the generalization error. To overcome this problem, we then proposed a post-processing algorithm that improves the output of DNNs by enforcing consistency.  
Experimental results on MRI reconstruction showed that applying our algorithm not only leads to better consistency, but also to significant reconstruction gains. And the better the improvement in consistency, the larger the gains.


\bibliographystyle{IEEEtran}
{\small
\bibliography{refs}
}

\end{document}